\title{B\"uchi VASS recognise $\w$\=/languages that are $\asigma{1}$\=/complete}
\ident{\init}{I}
\newcommand{\trans}[1]{\xrightarrow{#1}}
\author{\myname }
\affil{University of Warsaw}
\newcommand{\verA}{\mathnm{v}}
\newcommand{\verB}{\mathnm{x}}
\newcommand{\finA}{\mathnm{u}}
\newcommand{\orrA}{\mathnm{o}}
\newcommand{\infA}{\mathnm{\alpha}}
\newcommand{\infB}{\mathnm{\beta}}
\newcommand{\runA}{\mathnm{\rho}}
\ident{\IF}{IF}
\ident{\IFpre}{\IF_{\mathrm{pre}}}
\ident{\IFinx}{\IF_{\mathrm{inf}}}
\begin{document}
\maketitle

\begin{abstract}
This short note exhibits an example of a $\asigma{1}$-complete language that can be recognised by a one blind counter B\"uchi automaton (or equivalently a B\"uchi VASS with only one place).
\end{abstract}

In this work we study the topological complexity of $\w$\=/languages recognised by vector addition systems with states equipped with B\"uchi acceptance condition (B\"uchi VASS). This model corresponds, from the automata theoretic side, to the so\=/called partially blind multi\=/counter B\"uchi automata. As noted in~\cite{finkel_top_det_petri}, this model is able to define languages that are $\bsigma{3}$\=/complete, thus topologically harder than the languages definable by the deterministic variant of these automata. However, this topological complexity does not rule out the possibility of having a model of deterministic automata with a more complex acceptance condition that would be able to capture the expressive power of the non\=/deterministic ones.

The main result of this work shows that no model of deterministic machines with a reasonable acceptance condition can capture the expressive power of the non\=/deterministic devices.

\begin{theorem}
There exists an $\w$\=/language that is recognised by a B\"uchi VASS with one counter (i.e.~with one place) that recognises a $\asigma{1}$\=/complete $\w$\=/language.
\end{theorem}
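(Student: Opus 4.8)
The plan is to pin down an explicit $\omega$-language that is $\asigma{1}$-complete by a textbook reduction, and then to realise it by a B\"uchi VASS with a single place; the work is almost entirely in the second step. For the first step I use the prototypical $\asigma{1}$-complete set, the set of ill-founded trees, in a counter-friendly presentation. Fix a bijection $s_0, s_1, s_2, \dots$ between $\omega$ and $\omega^{<\omega}$ that is strictly increasing along branches, meaning $m < n$ whenever $s_m$ is a proper prefix of $s_n$ (for instance, order nodes by $|s| + \sum_i s(i)$, breaking ties lexicographically). For $w \in \{0,1\}^\omega$ put $T_w = \{\, s_k : w(k) = 1 \,\}$; let $\IFpre$ be the (closed) set of $w$ for which $T_w$ is prefix-closed, $\IFinx$ the ($\asigma{1}$) set of $w$ for which $T_w$ has an infinite branch, and $\IF = \IFpre \cap \IFinx$. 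Since every $\asigma{1}$ set $A$ can be written as $\{\, x : T^A_x \text{ is ill-founded} \,\}$ for a continuous map $x \mapsto T^A_x$ whose values are genuine trees, post-composing with $T \mapsto (\text{code of } T)$ reduces $A$ to $\IF$; as $\IF$ is itself $\asigma{1}$, it is $\asigma{1}$-complete. I would only sketch this standard descriptive set theory.

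The real content is recognising $\IF$ with a B\"uchi VASS having one counter. An accepting run guesses an infinite branch $n_0, n_1, n_2, \dots$ of $T_w$ one node at a time, keeps in the counter the \emph{index} of the current branch node, and enters an accepting state exactly when it commits to a new node of the branch. So in phase $d$ the automaton holds the index $k$ with $s_k = (n_0, \dots, n_{d-1})$; it nondeterministically picks $n_d$ together with the index $k'$ of $(n_0, \dots, n_{d-1}, n_d)$, updates the counter from $k$ to $k'$ (which requires only increments, as $k' > k$ by the choice of enumeration), reads forward through the input and verifies $s_{k'} \in T_w$, and then visits an accepting state. Visiting accepting states infinitely often thus says precisely that an entire infinite branch has been traced, so the B\"uchi condition holds iff $T_w$ is ill-founded; a run that eventually only ``explores'' the counter without ever committing again is exactly one that fails the B\"uchi condition, and a run whose guess is not a node of $T_w$ is killed when it reads the witnessing $0$. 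Prefix-closedness concerns only finitely much of $w$ at each step, so it can be monitored by a deterministic finite component that disables acceptance once it is violated.

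The place where the restriction to a \emph{single}, \emph{partially blind} counter bites, and the technical heart of the proof, is performing ``update the counter to $k'$ and locate position $k'$ in the input'' with a counter that can be neither copied nor tested for zero. I expect this to be done by supplying $w$ to the automaton not as the bare bit-stream but in a redundant streaming format in which, for every node and every candidate child, the relevant membership bit recurs infinitely often, each occurrence being reachable from the previous position of the reading head by a bounded amount of finite-state bookkeeping together with one monotone adjustment of the counter. One then has to design the enumeration and the format together so that a mis-navigated run --- one whose counter no longer matches the branch node it pretends to sit at --- is forced either to underflow, and hence die, or to wander into a region of the input from which an accepting state can no longer recur; only then is the set of words admitting a good run exactly $\IF$. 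The continuous reduction of the first step must of course be post-composed with the (again continuous) translation of the bit-stream into this streaming format, which does not affect $\asigma{1}$-completeness.
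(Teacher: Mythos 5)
Your first paragraph is fine as far as it goes (a standard reduction to coded ill\-/founded trees, plus the observation that every B\"uchi VASS language is $\asigma{1}$), but the proof has a genuine gap exactly where you locate ``the technical heart'', and the gap is conceptual, not merely technical. With a single partially blind counter the only way the counter value can ever influence a run is through underflow: a configuration $(q,c')$ with $c'\ge c$ simulates every behaviour of $(q,c)$, so the reachable configurations in a fixed state form a chain, and no ``redundant streaming format'' can make the automaton certify that its counter equals $k'$ rather than merely exceeds it. Consequently, storing the index $k$ of the current branch node $s_k$ cannot work for an arbitrary branch\-/increasing enumeration of $\omega^{<\omega}$: a run that has accumulated slack can overshoot the block intended for $s_k$ and impersonate any node of larger index, and since the indices of pairwise incomparable nodes interleave arbitrarily, such a run can keep ``committing'' forever by hopping between unrelated parts of the tree, accepting a word whose tree is well founded. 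Your sentence ``one then has to design the enumeration and the format together so that a mis\-/navigated run \dots is forced either to underflow \dots or to wander into a region from which an accepting state can no longer recur'' is precisely the statement of the theorem's difficulty, not a step of its proof.

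What is missing is a reason why surplus counter value is harmless, and supplying one forces a change of target. The paper abandons the branch\-/of\-/a\-/tree picture and reduces instead from $\IFinx$, the set of subsets of the binary tree containing an infinite descending chain in the infix order ${\inxeq}$ (a dense linear order of type $\Q$); ill\-/founded trees are first turned into ill\-/founded linear orders. The counter then stores a \emph{monotone} encoding $e(\verA)=m(|\verA|-1)\cdot b(\verA)$ of the position of the current node in that linear order, rescaled by a fast\-/growing function $m$. This gives exactly the three properties your sketch lacks: extra counter value can only let the run pretend to sit at an ${\inxeq}$\-/larger node of the same level, never at an unrelated one; the rescaling bounds the accumulated slack below $m(n)$ at level $n$, so the overshoot at the next level is strictly less than one ``unit'' and cannot change the relevant inequality $b(\verA_{n+1})\le b(\verA_n d_n)$; and acceptance is tied to infinitely many strict descents in the order, which slack can only hinder, never fabricate. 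If you tried to complete your construction you would be led to reinvent both the reduction to descending chains in a linear order and this monotone, self\-/correcting encoding; as written, the proposal does not yet contain the one\-/counter automaton whose existence is the content of the theorem.
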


The crucial difficulty in proving this result is the fact, that if a B\"uchi VASS has only one counter, then its number of states bounds the maximal size of an anti\=/chain of it's configurations: every two configurations with the same state are comparable with respect to the natural simulation order: if $c<c'$ then the configuration $(q,c')$ can simulate all the behaviours from the configuration $(q,c)$.

To simplify the presentation of the proof it is performed in three steps: first we provide an easy example of a $\asigma{1}$\=/complete $\w$\=/language recognised by a B\"uchi VASS with two counters; then we characterise a specific $\asigma{1}$\=/complete language (namely $\IFinx$); and finally we reduce the language $\IFinx$ to an $\w$\=/language recognised by a B\"uchi VASS with only one counter.

By the sole definition of the considered models (i.e.~non\=/deterministic partially blind multi\=/counter B\"uchi automata and B\"uchi VASS) all the $\w$\=/languages recognisable by these models belong to $\asigma{1}$. Thus, our efforts focus on providing hardness of the presented languages.

\section{Preliminary notions}

\paragraph*{B\"uchi VASS} A B\"uchi VASS (or shortly VASS, as we consider only the B\"uchi acceptance condition) is a tuple $\Aa=\langle A,Q, q_\init, F, C, \delta \rangle$, where:
\begin{itemize}
\item $A$ is a finite \emph{input alphabet},
\item $Q$ is a finite set of \emph{states},
\item $q_\init\in Q$ is the \emph{initial state},
\item $F\subseteq Q$ is the set of \emph{accepting states},
\item $C$ is a finite set of \emph{counters},
\item $\delta$ is a finite \emph{transition relation}, its elements are \emph{transitions} $(q,a,\tau,q')$ where $q,q'\in Q$, $a\in A$, and $\fun{\tau}{C}{\Z}$.
\end{itemize}
Without loss of generality we assume that the set of counters $C$ has the form $C=\{1,2,\ldots,k\}$ for some $k$ (in this work $1$ or $2$). We visually represent a transition $(q,a,\tau,q')$ by $q\trans{a:\big(\tau(1),\tau(2),\ldots,\tau(k)\big)}q'$. We say that such a transition is \emph{over} the letter $a$. If $A'\subseteq A$ then $q\trans{A':\big(\tau(1),\tau(2),\ldots,\tau(k)\big)}q'$ means that for each $a\in A'$ there is a respective transition. Similarly, $q\tran{a}q'$ and $q\trans{A'}q'$ denote the respective transitions that do not modify the counter values (i.e.~$\tau$ is constant $0$).

A \emph{configuration} of a VASS $\Aa$ is a tuple $(q,c_1,c_2,\ldots,c_k)$ where $q\in Q$, $c_1,\ldots,c_k\in \N$, and $\{1,\ldots,k\}=C$. The \emph{initial configuration} is $(q_\init,0,\ldots,0)$. We say that a transition $q\trans{a:\big(\tau(1),\ldots,\tau(k)\big)}q'$ \emph{goes} from a configuration $(q,c_1,\ldots,c_k)$ to a configuration $\big(q',c_1+\tau(1),\ldots,c_k+\tau(k)\big)$ (note that by the definition it requires all the numbers $c_i+\tau(i)$ to be non\=/negative).

Let $\infA\in A^{\w}$ be an $\w$\=/word over the input alphabet. A \emph{run} of a VASS $\Aa$ over $\infA$ is an infinite sequence $\runA$ of configurations, such that $\runA(0)$ is the initial configuration and for every $i\in\w$ there is a transition of $\Aa$ over the letter $\infA(i)$ that goes from the configuration $\runA(i)$ to the configuration $\runA(i+1)$. A run $\runA$ is \emph{accepting} if for infinitely many $i$ the configuration $\runA(i)=(q_i,\ldots)$ satisfies $q_i\in F$ (i.e.~it visits infinitely many times an accepting state). A VASS $\Aa$ \emph{accepts} an $\w$\=/word $\infA$ if there exists an accepting run of $\Aa$ over $\infA$. The \emph{language} of $\Aa$ (denoted $\lang(\Aa)$) is the set of $\w$\=/words accepted by $\Aa$.

\paragraph*{Topology} We will use the natural notions of topology on Polish spaces. By $\asigma{1}$ we denote the family of analytic sets, i.e.~projections of Borel sets.

\paragraph*{Binary trees} The binary tree is the set of all sequences of directions $\Tt\eqdef\{\dL,\dR\}^\ast$ where the \emph{directions} $\dL,\dR$ are two fixed distinct symbols. For technical reasons we sometimes consider a third direction $\dM$ (it does not occur in the binary tree).

A set $X\subseteq \Tt$ can be naturally identified with its characteristic function $X\in \{0,1\}^{\big(\{\dL,\dR\}^\ast\big)}$. Thus, the set of all subsets of the binary tree with the natural product topology is homeomorphic with the Cantor set $\{0,1\}^\omega$.

The elements $\verA,\verB\in\Tt$ are called \emph{nodes}. Nodes are naturally ordered by the following three orders:
\begin{itemize}
\item the prefix order: $\verA\preceq \verB$ if $\verB$ can be obtained by concatenating something at the end of $\verA$,
\item the lexicographic order: $\verA\lexeq \verB$ if $\verA$ is lexicographically smaller than $\verB$ (we assume that $\dL\lex \dM \lex \dR$),
\item the infix order: $\verA\inxeq \verB$ if $\verA \dM^\omega$ is lexicographically smaller than $\verB \dM^\omega$.
\end{itemize}

Notice that, for every fixed $n$, when restricted to $\{\dL,\dR\}^n$, the lexicographic and infix orders coincide. However, $\dL \inx \epsilon\inx \dR$ but $\epsilon$ is the minimal element of ${\lexeq}$. Both the lexicographic and infix orders are linear. 

\begin{fact}
The order $(\Tt,{\inxeq})$ is isomorphic with the order $(\Q,{\leq})$.
\end{fact}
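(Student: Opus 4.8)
The plan is to derive the claim from Cantor's classical theorem that every countable dense linear order without endpoints is isomorphic to $(\Q,{\leq})$, the isomorphism being produced by the usual back\=/and\=/forth construction. So it suffices to check that $(\Tt,{\inxeq})$ has these four properties. Countability is immediate, since $\Tt=\{\dL,\dR\}^\ast$ is a countable union of finite sets, and linearity of $\inxeq$ has already been recorded above.

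For the absence of endpoints I would observe that every node $\verA$ satisfies $\verA\dL\inx\verA\inx\verA\dR$ with both inequalities strict: the words $\verA\dL\,\dM^\omega$, $\verA\,\dM^\omega$, $\verA\dR\,\dM^\omega$ agree on the prefix $\verA$ and their next letters are $\dL$, $\dM$, $\dR$, so the claim follows from $\dL\lex\dM\lex\dR$. Hence no node is $\inxeq$\=/minimal or $\inxeq$\=/maximal.

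The only step requiring a little care is density: given $\verA\inx\verB$ I must exhibit a node strictly between them, and I would split on the longest common prefix $p$ of $\verA$ and $\verB$. If neither of $\verA,\verB$ is a prefix of the other, then writing $\verA=p a s$, $\verB=p b t$ with $a\neq b$ and comparing $\dM^\omega$\=/extensions forces $a=\dL$ and $b=\dR$; then $p$ itself lies strictly between, since $p\dL s\,\dM^\omega\lex p\,\dM^\omega\lex p\dR t\,\dM^\omega$. If $\verA$ is a proper prefix of $\verB$, the same comparison forces $\verB=\verA\dR t$, and then $\verA\dR\dL^{n}$ lies strictly between $\verA$ and $\verB$ as soon as $n>|t|$ (its $\dM^\omega$\=/extension exceeds that of $\verA$ already at the letter after $\verA$, and is dominated by that of $\verB$ because a block of $\dL$'s longer than $t$ is lexicographically below $t\,\dM^\omega$). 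The mirror case, $\verB$ a proper prefix of $\verA$, forces $\verA=\verB\dL s$ and is handled symmetrically by $\verB\dL\dR^{n}$ with $n>|s|$.

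Alternatively, density follows painlessly from an explicit order embedding into the dyadic rationals: I would attach to each node $\verA$ an open interval $I_\verA\subseteq(0,1)$ with $I_\epsilon=(0,1)$ and $I_{\verA\dL}$, $I_{\verA\dR}$ the left and right halves of $I_\verA$, and set $f(\verA)$ to be the midpoint of $I_\verA$; a routine check of the same three cases shows $\verA\inx\verB$ implies $f(\verA)<f(\verB)$, while the diameters of the $I_\verA$ tend to $0$, so $\{f(\verA):\verA\in\Tt\}$ is dense in $(0,1)$, whence density of $\inxeq$. I foresee no genuine obstacle here; the only mildly technical point is the case analysis for density (equivalently, the verification that $f$ is order\=/preserving), and everything else is routine.
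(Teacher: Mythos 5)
Your proof is correct. The paper states this fact without any proof, so there is nothing to compare against; your route --- verifying that $(\Tt,{\inxeq})$ is a countable dense linear order without endpoints and invoking Cantor's back\-/and\-/forth theorem --- is the standard argument one would expect, and your case analysis for density (common\-/prefix case handled by the prefix $p$ itself, the prefix cases handled by $\verA\dR\dL^n$ and $\verB\dL\dR^n$ with $n$ large enough) checks out in all cases, including $t$ or $s$ empty.
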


\paragraph*{Hardness} In the following part of the paper we will use the following two sets:
\begin{align*}
\IFpre &\eqdef \{X\subseteq\Tt\mid \text{$X$ contains an infinite ${\preceq}$\=/ascending chain}\},\\
\IFinx &\eqdef \{X\subseteq\Tt\mid \text{$X$ contains an infinite ${\inxeq}$\=/descending chain}\}.
\end{align*}

\begin{lemma}
The sets $\IFpre$ and $\IFinx$ are $\asigma{1}$\=/complete.
\end{lemma}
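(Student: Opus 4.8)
The plan is to prove the two halves of $\asigma{1}$\=/completeness separately: membership in $\asigma{1}$, which is routine, and $\asigma{1}$\=/hardness, which carries the content. For membership I would simply read off the logical form. The statement $X\in\IFpre$ says: there is a sequence $(\verA_i)_{i\in\w}\in\Tt^{\w}$ with $\verA_i\in X$ and $\verA_i\prec\verA_{i+1}$ for all $i$; and $X\in\IFinx$ is the same with $\verB_{i+1}\inx\verB_i$. In either case the existential quantifier ranges over the Polish space $\Tt^{\w}$, while the matrix is a countable conjunction of conditions that are clopen in $\{0,1\}^{\Tt}\times\Tt^{\w}$ (both ``$\verA_i\in X$'' and ``$\verA_i\prec\verA_{i+1}$'' are clopen), hence the matrix is closed. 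Thus each of $\IFpre$ and $\IFinx$ is a projection of a closed set, i.e.\ analytic.

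For hardness I would reduce from the classical $\asigma{1}$\=/complete set $\IF$ of ill\=/founded trees over $\w$, using that every analytic subset of a Polish space is a continuous preimage of $\IF$; so it is enough to give continuous maps witnessing $\IF\le\IFpre$ and $\IF\le\IFinx$. For $\IFpre$ I would fix the self\=/delimiting coding $e\colon\w^{<\w}\to\Tt$ defined by $e(n_0 n_1\cdots n_{k-1})=\dR^{n_0}\dL\,\dR^{n_1}\dL\cdots\dR^{n_{k-1}}\dL$. It is prefix\=/preserving, and since every block of $\dR$'s is closed off by a $\dL$ it also \emph{reflects} the prefix order: $e(\finA)\preceq e(\finB)$ forces $\finA\preceq\finB$. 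Hence $T\mapsto e[T]$ is a continuous map with the property that $e[T]$ contains an infinite $\preceq$\=/ascending chain if and only if $T$ has an infinite branch, i.e.\ $T\in\IF$ if and only if $e[T]\in\IFpre$.

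For $\IFinx$ I would pass through the Kleene--Brouwer order $<_{\mathrm{KB}}$ on $\w^{<\w}$ (proper extensions of a node lie below it; incomparable nodes are compared at their first discrepancy) and the classical fact that a tree $T$ is ill\=/founded if and only if $(T,{<_{\mathrm{KB}}})$ has an infinite descending sequence. Since $(\w^{<\w},{<_{\mathrm{KB}}})$ is a countable linear order and, by the Fact recorded above, $(\Tt,{\inxeq})\cong(\Q,{\le})$ is universal for countable linear orders, I would fix once and for all an order\=/embedding $g\colon(\w^{<\w},{<_{\mathrm{KB}}})\to(\Tt,{\inxeq})$. Then $T\mapsto g[T]$ is continuous, and $g$ restricts to an isomorphism of linear orders $(T,{<_{\mathrm{KB}}})\cong(g[T],{\inxeq})$; therefore $g[T]$ contains an infinite $\inxeq$\=/descending chain iff $(T,{<_{\mathrm{KB}}})$ does iff $T$ is ill\=/founded, that is, $T\in\IF$ iff $g[T]\in\IFinx$.

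I expect the main obstacle to be precisely this last step. Any hands\=/on encoding of $\w^{<\w}$ into the binary tree only \emph{preserves} the Kleene--Brouwer order, since in the infix order the descendants of a node straddle it on both sides, whereas the reduction needs the encoding to \emph{reflect} $<_{\mathrm{KB}}$ as well; routing through the universality of $\Q$ is what makes a genuine order\=/embedding available and keeps the verification short. The remaining points---that $e$ reflects $\preceq$, that both coding maps are continuous (each output coordinate is either constantly $0$ or a single input coordinate), and the Kleene--Brouwer lemma itself---are routine.
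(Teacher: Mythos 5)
Your proof is correct, and while the membership argument and the $\IFpre$ reduction match the paper's (the paper merely cites the analogue of \cite[Exercise~27.3]{kechris_descriptive}, whereas you spell out the standard self\=/delimiting encoding $\dR^{n_0}\dL\cdots\dR^{n_{k-1}}\dL$, which indeed both preserves and reflects $\preceq$), your treatment of $\IFinx$ takes a genuinely different route. The paper reduces from the set of ill\=/founded linear orders $\orrA$ on $\w$ and builds the witness set $X_{\orrA}$ \emph{adaptively}: it places one node $x_n$ at each depth $n$ so that $k\mapsto x_k$ realises the finite order type of $(\{0,\dots,n\},\orrA)$ inside $(\Tt,{\inxeq})$, with continuity following because level $n$ of the output depends only on $\orrA\cap\{0,\dots,n\}^2$; in effect it inlines a constructive proof of the universality of $(\Tt,{\inxeq})$ into the reduction. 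You instead reduce directly from ill\=/founded trees via the Kleene--Brouwer order and a single embedding $g$ of all of $(\w^{<\w},{<_{\mathrm{KB}}})$ into $(\Tt,{\inxeq})$, obtained once and for all from the universality of $\Q$ (the paper's recorded Fact), and then just take direct images $T\mapsto g[T]$. This buys you a \emph{static} reduction whose continuity is immediate (each output coordinate is constant or a single input coordinate) and avoids invoking the completeness of ill\=/founded linear orders as a separate black box --- that fact is itself usually derived from ill\=/founded trees via Kleene--Brouwer, so your argument is arguably the more economical of the two; the paper's version is more self\=/contained in that it never appeals to universality abstractly but constructs the embedding by hand. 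Your diagnosis of where the difficulty sits --- that a hands\=/on coding preserves but need not reflect ${<_{\mathrm{KB}}}$, and that a genuine order\=/embedding is what is needed --- is exactly right, and is the same issue the paper's inductive choice of $x_{n+1}$ is designed to solve.
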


\begin{proof}
Both sets belong to $\asigma{1}$ just by the form of the definition. $\IFpre$ is $\asigma{1}$\=/hard by an easy reduction from the set of ill\=/founded $\w$\=/branching trees, the proof is similar to~\cite[Exercise~27.3]{kechris_descriptive}.

$\IFinx$ is $\asigma{1}$\=/hard by a reduction from the set of ill\=/founded linear orders on $\w$ (seen as elements of $\{0,1\}^{\w\times\w}$). Let us prove this fact more formally. Consider an element $\orrA\in \{0,1\}^{\w\times\w}$ that is a linear order on $\w$ (i.e.~$(n,k)\in \orrA$ means that $n$ is $\orrA$-smaller-or-equal $k$). We will inductively define $X_{\orrA}\subseteq\Tt$ in such a way to ensure that $\orrA\mapsto X_{\orrA}$ is a continuous mapping and $\orrA$ is ill\=/founded if and only if $X_{\orrA}\in\IFinx$.

Let us proceed inductively, defining a sequence of nodes $(x_n)_{n\in\w}\subseteq\Tt$. Our invariant says that $|x_k|=k$ and the map $k\mapsto x_k$ is an isomorphism of the orders $\big(\{0,1,\ldots,n\}, \orrA\big)$ and $\big(\{x_0,x_1,\ldots,x_n\},{\inxeq}\big)$. We start with $x_0=\epsilon$ (i.e.~the root of $\Tt$). Assume that $x_0, \ldots x_n$ are defined and satisfy the invariants. By the definition of ${\inxeq}$, there exists a node $x\in\{\dL,\dR\}^n$ such that $x\inxeq x_k$ if and only if $(n{+}1,k)\in o$, for $k=0,1,\ldots,n$. Let $x_{n+1}$ be such a node.

The above induction defines an infinite sequence of nodes $x_0,x_1,\ldots$ Let $X_{\orrA}\eqdef\{x_n\mid n\in\w\}\subseteq\Tt$. By the definition of $X_{\orrA}$ the mapping $\orrA\mapsto X_{\orrA}$ is continuous --- the fact whether a node $x\in\Tt$ belongs to $X_{\orrA}$ depends only on $\orrA\cap\{0,1,\ldots,|x|\}^2$. Using our invariant, we know that the map $k\mapsto x_k$ is an isomorphism of the orders $\big(\w, \orrA\big)$ and $\big(X_{\orrA}, {\inxeq}\big)$. Thus, $\orrA$ is ill\=/founded if and only if $X_{\orrA}\in \IFinx$.
\end{proof}

\section{Hardness for \texorpdfstring{$2$\=/}{2-}counters}

In this section we provide an example of an $\w$\=/language that is $\asigma{1}$\=/complete and can be recognised by a VASS $\Aa_2$ with two counters. The VASS $\Aa_2$ is depicted on Figure~\ref{fig:two-vass}. Let $A_0\eqdef\big\{{<}, d_1, d_2, {|}, i_1, i_2, {+}, {-}, {>}\big\}$ and let the alphabet $A\eqdef A_0\cup\{\sharp\}$. The initial state is $q_0$, the only accepting state is $q_a$. The only non\=/determinism occurs in $q_0$ when reading ${<}$ --- the VASS can stay in $q_0$ or move to $q_1$. The only states that modify the counter values are $q_1$ and $q_2$.

\begin{figure}[th]
\centering
\begin{tikzpicture}[->,>=latex,auto,initial text={},scale=0.9]
\tikzstyle{trans}=[scale=0.8]
\node[state,initial  ] (q0) at (0,0) {$q_0$};
\node[state          ] (q1) at (3,0) {$q_1$};
\node[state          ] (q2) at (6,0) {$q_2$};
\node[state,accepting] (qf) at (9,+1) {$q_a$};
\node[state          ] (qr) at (9,-1) {$q_r$};
\node[state          ] (q3) at (12,0) {$q_3$};

\path (q0) edge [loop, in=60 , out=120, looseness=8] node[trans] {$A_0$} (q0)
           edge node[trans] {${<}$} (q1)
      (q1) edge [loop, in=60 , out=120, looseness=8] node[trans] {$d_1:(-1,0)$} (q1)
           edge [loop, in=240, out=300, looseness=8] node[trans] {$d_2:(0,-1)$} (q1)
           edge node[trans] {${|}$} (q2)
      (q2) edge [loop, in=60 , out=120, looseness=8] node[trans] {$i_1:(+1,0)$} (q2)
           edge [loop, in=240, out=300, looseness=8] node[trans] {$i_2:(0,+1)$} (q2)
           edge node[above, trans] {${+}$} (qf)
           edge node[below, trans] {${-}$} (qr)
      (qf) edge node[above, trans] {${>}$} (q3)
      (qr) edge node[below, trans] {${>}$} (q3)
      (q3) edge [loop, in=60 , out=120, looseness=8] node[trans] {$A_0$} (q3)
           edge [in=-90, out=270, looseness=0.8] node[below, trans] {$\sharp$} (q0);

\end{tikzpicture}
\caption{The VASS $\Aa_2$ with two counters that recognises a $\asigma{1}$\=/complete $\w$\=/language.}
\label{fig:two-vass}
\end{figure}

\begin{lemma}
\label{lem:hardness-two}
There exists a continuous reduction from $\IFpre$ to the $\w$\=/language recognised by $\Aa_2$.
\end{lemma}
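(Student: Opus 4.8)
The plan is to design the reduction so that an input $X \subseteq \Tt$ is encoded as an $\w$\=/word that consists of infinitely many ``query blocks'', each block of the shape ${<}\, w \, {|}\, w' \, (\pm)\, {>}$ where $w, w'$ encode two nodes of $\Tt$ and the final symbol records whether the second node is a (strict) prefix of the first. The two counters of $\Aa_2$ will be used to compute, inside the segment $q_1 \to q_2$, the test ``is the node read after ${|}$ a prefix of the node read after ${<}$''. Concretely: while in $q_1$ the automaton reads a word $d_1^{a} d_2^{b}$ (decrementing counter $1$ by $a$ and counter $2$ by $b$), and while in $q_2$ it reads $i_1^{c} i_2^{e}$ (incrementing counter $1$ by $c$ and counter $2$ by $e$); the passage $q_1 \to q_2$ over ${|}$ and then $q_2 \to q_a$ over ${+}$ is possible precisely when there is a choice of the counter histories making everything nonnegative, which — because the counters start at $0$ each time $q_0$ is re\=/entered and the run through $q_0$ does not touch them — forces $a = b = 0$ at the moment ${|}$ is read and hence constrains $(c,e)$ only by $c,e \ge 0$. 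I would instead use the standard trick of encoding a node $v \in \{\dL,\dR\}^\ast$ by a pair of natural numbers (e.g.\ its length and its value read as a binary number with a leading $1$), so that ``$v' \preceq v$'' becomes a conjunction of two affine tests that two blind counters can verify by first subtracting and then, in the $q_2$\=/phase, adding back, reaching $q_a$ iff the prefix relation holds and $q_r$ otherwise; the letters $d_1,d_2,i_1,i_2,{+},{-}$ are exactly the alphabet symbols that drive this.

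Given this gadget, the reduction $f$ sends $X$ to the concatenation of query blocks enumerating, in some fixed effective order, all pairs $(v',v)$ with $v' \preceq v$, $v' \ne v$, $v,v' \in X$ — but ``$v, v' \in X$'' must be read off the oracle for $X$, so more precisely $f$ enumerates all pairs $(v',v)$ of nodes with $v' \prec v$, and for each such pair emits a block that uses the ${+}$ exit (routing through $q_a$) if and only if \emph{both} $v'$ and $v$ lie in $X$ and $v' \prec v$ (which it can decide from finitely much of $X$, so $f$ is continuous), and otherwise emits a block using the ${-}$ exit through $q_r$; after each block a separator $\sharp$ returns control to $q_0$. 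Then $\Aa_2$ has an accepting run on $f(X)$ iff it can, at each block, pass the counter test matching the ${+}$/${-}$ label (which it always can, since the block was built to be consistent) \emph{and} visit $q_a$ infinitely often, i.e.\ iff infinitely many blocks carry the ${+}$ label, i.e.\ iff $X$ contains pairs $v' \prec v$ with both endpoints in $X$ for cofinally many such pairs — which, by enumerating pairs in a length\=/increasing fashion, is equivalent to $X$ containing an infinite $\preceq$\=/ascending chain, i.e.\ $X \in \IFpre$. (For the forward direction one extracts an ascending chain by König's lemma from infinitely many nested pairs inside a finitely\=/branching tree; for the backward direction an infinite chain yields infinitely many labelled\=/${+}$ blocks directly.)

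The main obstacle — and the point that needs the most care — is the interaction between the \emph{blindness} of the counters and the need for the counter test to have \emph{no false positives}: a blind VASS cannot test for zero, so after the $q_1$\=/phase subtracts and the $q_2$\=/phase adds, a run could ``cheat'' by not subtracting enough (the counter only has to stay $\ge 0$), which would let the ${+}$ branch be taken on a block where $v' \not\preceq v$. I will neutralise this by making the encoding and the block format so that the two counters together pin down the node values exactly: counter $1$ will be forced down to $0$ and back, and counter $2$ will carry a quantity that can only be restored to a value compatible with continuing the run if the prefix relation genuinely held — the usual VASS technique of using the \emph{second} counter as a ``certificate'' that the first was manipulated honestly, together with the fact that re\=/entering $q_0$ resets nothing but the only way to loop forever through $q_a$ is to survive every block. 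Verifying that this format admits an accepting run exactly on the intended words, and that $f$ is continuous (each emitted block depends on only finitely many bits of $X$) and well\=/defined (the enumeration of pairs $v' \prec v$ is fixed in advance), is then a routine check; the soundness of the counter gadget is the crux and is where I expect to spend most of the proof.
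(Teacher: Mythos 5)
Your reduction has a structural flaw that no amount of care with the counter gadget can repair. In your construction each block is self\-contained: whether it is emitted with the ${+}$ exit is decided by the reduction itself from finitely much of $X$, and you then claim that $\Aa_2$ accepts $f(X)$ if and only if infinitely many blocks carry the ${+}$ label. But ``infinitely many of a fixed sequence of clopen conditions on $X$ hold'' is a Borel condition on $X$ (a countable intersection of countable unions of clopen sets), whereas $\IFpre$ is $\asigma{1}$\-complete and hence not Borel; so no such equivalence can witness hardness. The concrete failure is your claimed equivalence between ``there are infinitely many pairs $v'\prec v$ with $v',v\in X$'' and ``$X$ contains an infinite ${\preceq}$\-ascending chain'': take $X=\{\dL^n\dR,\ \dL^n\dR\dR \mid n\in\N\}$. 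This $X$ admits infinitely many such pairs, yet every ${\preceq}$\-chain inside $X$ has exactly two elements, so $X\notin\IFpre$. K\"onig's lemma does not rescue this: the downward closure of $X$ is an infinite finitely branching tree and does have an infinite branch ($\dL^\omega$ here), but that branch need not meet $X$ infinitely often, or at all. (You also misread the automaton: the counters are \emph{not} reset to $0$ when $q_0$ is re\-entered via $\sharp$ --- there is no resetting transition, and this persistence is essential.)

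The idea you are missing is that the existential quantifier over infinite chains --- the source of the $\asigma{1}$\-hardness --- must be simulated by the automaton's nondeterminism, and the counters must carry information \emph{across} the $\sharp$\-separators so that guesses made in successive phases are forced to be consistent. The paper's construction does exactly this: the $n$\-th phase contains one block per node of depth $n+1$; a run entering phase $n$ in configuration $(q_0,b(\verA),b'(\verA))$ (with $b$ the binary value of $\verA$ and $b'(\verA)=2^n-b(\verA)-1$ its complement) can choose only a block whose decrement part is coordinatewise dominated by $(b(\verA),b'(\verA))$, and since the pairs $(b(\cdot),b'(\cdot))$ over nodes of a fixed depth form an antichain, the only admissible blocks are the two belonging to the children of $\verA$. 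Hence runs are in bijection with infinite branches, the run visits $q_a$ in phase $n$ exactly when the depth\-$n$ node of its branch lies in $X$, and acceptance becomes ``there exists a branch meeting $X$ infinitely often'', which is genuinely equivalent to $X\in\IFpre$. Your per\-block ``query'' design, in which the counters are spent and restored within a single block, discards exactly this cross\-phase memory and therefore cannot work.
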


\paragraph*{Intuition} An $\w$\=/word accepted by $\Aa_2$ consists of infinitely many \emph{phases} separated by $\sharp$. Each phase is a finite word over the alphabet $A_0$. In our reduction we will restrict to phases being sequences of \emph{blocks}, each block of the form given by the following definition (for $n_1,n_2,m_1,m_2\in\N$ and $s\in\{{+},{-}\}$):
\begin{equation}
B^s(-n_1,-n_2,+m_1,+m_2)\eqdef {<}\ d_1^{n_1}\ d_2^{n_2}\ {|}\ i_1^{m_1}\ i_2^{m_2}\ s\ {>}\ \in A_0^\ast.
\label{eq:block-two}
\end{equation}
Such a block is \emph{accepting} if $s={+}$, otherwise $s={-}$ and the block is \emph{rejecting}. If $\Aa_2$ starts reading a block and moves from $q_0$ to $q_1$ over ${<}$ then we say that it \emph{chooses} this block. Otherwise $\Aa_2$ stays in $q_0$ and it does not choose the given block. By the construction of the VASS $\Aa_2$, in every run it needs to choose exactly one block from each phase. Additionally, the run is accepting if and only if infinitely many of the chosen blocks are accepting.

In our reduction we will represent a given set $X\subseteq \Tt$ by an appropriately defined sequence of phases. We will control the set of configurations the VASS can reach at the beginning of each phase. These configurations will form an anti\=/chain with respect to the coordinate\=/wise order: if the VASS can reach two distinct configurations $(q_0,c_1,c_2)$ and $(q_0,c_1',c_2')$ then either $c_1<c_1'$ and $c_2>c_2'$; or $c_1>c_1'$ and $c_2<c_2'$. Each block in the successive phase will be of the form $B^s(-c_1,-c_2,+m_1,+m_2)$ for some reachable configuration $(q_0,c_1,c_2)$ --- this will be the only reachable configuration in which the automaton can choose the considered block. After choosing it, the automaton will finish reading the phase in the configuration $(q_3,m_1,m_2)$.

\paragraph*{Proof of Lemma~\ref{lem:hardness-two}} For the rest of this section we prove Lemma~\ref{lem:hardness-two}. Let us fix a set $X\subseteq \Tt$. We will construct an $\w$\=/word $\infA(X)\in A^\omega$. The $\w$\=/word $\infA(X)$ will consist of infinitely many phases $\infA(X)=\finA_0\sharp\finA_1\sharp\cdots$, for $\finA_n\in A_0^\ast$. The $n$\=/th phase $\finA_n$ (for $n=0,1,\ldots$) will depend on $X\cap\{\dL,\dR\}^{n}$. This will guarantee that the function $\fun{\infA}{2^{\Tt}}{A^\omega}$ is continuous. The proof will be concluded be the following claim.

\begin{claim}
\label{cla:red-two}
$X$ has an infinite ${\preceq}$\=/ascending chain if and only if $\Aa_2$ accepts $\infA(X)$.
\end{claim}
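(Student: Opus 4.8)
\emph{Overview and coding.} The plan is to let $\infA(X)$ present to $\Aa_2$, phase by phase, a fixed catalogue of blocks whose source counter values are exactly the codes of the nodes of one level of $\Tt$, so that the only runs surviving forever are forced to walk down a branch of $\Tt$, picking an accepting block precisely at the levels at which the branch meets $X$. Concretely, enumerate the $2^n$ nodes of length $n$ in lexicographic order as $v^n_1,\dots,v^n_{2^n}$ (on a fixed level the lexicographic and infix orders agree) and assign to $v^n_i$ the pair $\gamma(v^n_i)\eqdef(i-1,\ 2^n-i)\in\N^2$. Three properties will matter: $\gamma(\epsilon)=(0,0)$ equals the counter part of the initial configuration; on each level the assigned pairs form an antichain for the coordinate-wise order on $\N^2$ (first coordinates strictly increase and second coordinates strictly decrease along the level -- in fact the coordinate sums are constantly $2^n-1$); and the children $v\dL,v\dR$ of the $i$-th node of level $n$ are the $(2i{-}1)$-st and $(2i)$-th nodes of level $n{+}1$, so a short computation shows $\gamma(v\dL)$ and $\gamma(v\dR)$ arise from $\gamma(v)$ by adding a pair of non-negative integers.

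\emph{The word $\infA(X)$.} For a node $v$ of length $n$ and a direction $d\in\{\dL,\dR\}$ let $B_{v,d}$ be the block of \eqref{eq:block-two} whose parameters $n_1,n_2$ are the coordinates of $\gamma(v)$, whose $m_1,m_2$ are the coordinates of $\gamma(vd)$, and whose sign is $s={+}$ if $v\in X$ and $s={-}$ otherwise. Let $\finA_n$ be the concatenation of all $B_{v,d}$ for $v\in\{\dL,\dR\}^n$ (in lexicographic order) and $d\in\{\dL,\dR\}$, and set $\infA(X)=\finA_0\sharp\finA_1\sharp\cdots$. Only the signs of blocks use $X$, and the sign of $B_{v,d}$ with $|v|=n$ depends only on whether $v\in X$, so $\finA_n$ depends only on $X\cap\{\dL,\dR\}^n$ -- the continuity claimed above. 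The routine local computation is: from configuration $(q_0,\gamma(v))$ the block $B_{v,d}$ can be chosen and completed, its two decrements taking both counters down to $0$ and its increments then installing $\gamma(vd)$, leading through $q_1,q_2$ and one of $q_a,q_r$ into $q_3$ with counter values $\gamma(vd)$; while from $(q_0,\gamma(v'))$ with $v'$ a different node of length $n$, the automaton after entering $q_1$ on the leading ${<}$ of $B_{v,d}$ must fail one of the first two decrements, since $\gamma(v')\not\ge\gamma(v)$ coordinate-wise by the antichain property.

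\emph{Forcing lemma and bijection with branches.} I would prove by induction on $n$ that any run of $\Aa_2$ on $\infA(X)$ not dead before the start of phase $n$ is then in configuration $(q_0,\gamma(v))$ for a unique node $v$ of length $n$, is forced within phase $n$ to choose one of $B_{v,\dL},B_{v,\dR}$, say $B_{v,d}$, and after the following $\sharp$ enters phase $n{+}1$ in configuration $(q_0,\gamma(vd))$. The base case is $v=\epsilon$. For the step: every run chooses exactly one block per phase, every block of phase $n$ has source $\gamma(v')$ with $|v'|=n$, and since the level-$n$ codes form a $\le$-antichain the only one that is $\le\gamma(v)$ coordinate-wise is $\gamma(v)$ itself, so by the local computation the run can complete only $B_{v,\dL}$ or $B_{v,\dR}$, after which it sits in $(q_0,\gamma(vd))$. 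This antichain-driven forcing is the step I expect to be the main obstacle; the rest is bookkeeping about the shape of $\Aa_2$ and of its blocks. It follows that the infinite runs of $\Aa_2$ on $\infA(X)$ are in bijection with the branches $\beta\in\{\dL,\dR\}^\omega$, the run matching $\beta$ choosing at phase $n$ the block $B_{w,d}$ where $w$ is the length-$n$ prefix of $\beta$ and $d$ its $(n{+}1)$-st direction.

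\emph{Conclusion.} Under this bijection the run matching $\beta$ is accepting iff it chooses infinitely many accepting blocks, and the block chosen at phase $n$ is accepting exactly when its source -- the length-$n$ prefix of $\beta$ -- belongs to $X$. Hence $\Aa_2$ accepts $\infA(X)$ iff some branch $\beta$ has infinitely many prefixes in $X$, which is equivalent to $X$ containing an infinite ${\preceq}$-ascending chain: from such a chain $v_0\prec v_1\prec\cdots$ the branch $\beta=\bigcup_i v_i$ has infinitely many prefixes in $X$; conversely the prefixes of a branch $\beta$ lying in $X$ form an infinite ${\preceq}$-ascending chain. This is exactly Claim~\ref{cla:red-two}, and combined with the continuity of $\infA$ it proves Lemma~\ref{lem:hardness-two}.
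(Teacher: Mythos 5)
Your proposal is correct and follows essentially the same route as the paper: your coding $\gamma(v^n_i)=(i-1,\,2^n-i)$ is exactly the paper's pair $\big(b(\verA),b'(\verA)\big)$, the blocks and phases are built identically, and your ``forcing lemma'' is precisely the paper's Fact (proved there by ``easy induction'') establishing the bijection between runs and branches. You merely spell out in more detail why the antichain property forces each run to follow a single branch, which is exactly the intended argument.
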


To simplify the construction, let us define inductively the function $\fun{b}{\Tt}{\N}$, assigning to nodes $\verA\in\Tt$ their binary value $b(\verA)$:
\begin{itemize}
\item $b(\epsilon)=0$,
\item $b(\verA \dL) = 2\cdot b(\verA)$,
\item $b(\verA \dR) = 2\cdot b(\verA)+1$.
\end{itemize}
Let $b'(\verA) = 2^n-b(\verA)-1$ for $n=|\verA|$ (i.e.~$\verA\in\{\dL,\dR\}^n$). Note that for every $n\in\N$ we have
\[b\big(\{\dL,\dR\}^n\big)=b'\big(\{\dL,\dR\}^n\big)=\{0,1,\ldots,2^n-1\},\]
and both $b$ and $b'$ are bijective between these sets. Additionally, if $\verA\neq\verA'\in\{\dL,\dR\}^n$ then either $b(\verA)<b(\verA')$ and $b'(\verA)>b'(\verA')$; or $b(\verA)>b(\verA')$ and $b'(\verA)<b'(\verA')$.

We take any $n=0,1,\ldots$ and define the $n$\=/th phase $\finA_n$. Let $\finA_n$ be the concatenation of the following blocks, for all $\verA\in\{\dL,\dR\}^n$ and $d\in\{\dL,\dR\}$:
\[B^s\big({-}b(\verA), {-}b'(\verA), {+}b(\verA d), {+}b'(\verA d)\big),\]
where $s={+}$ if $\verA\in X$ and $s={-}$ otherwise. Thus, the $n$\=/th phase is a concatenation of $2^{n+1}$ blocks, one for each node $\verA d$ in $\{\dL,\dR\}^{n+1}$.

To prove Claim~\ref{cla:red-two} it is enough to notice the following fact.

\begin{fact}
There is a bijection between infinite branches $\infB\in\{\dL,\dR\}^\omega$ and runs $\runA$ of $\Aa_2$ over $\infA(X)$. The bijection satisfies that the configuration in $\runA$ before reading the $n$\=/th phase of $\infA(X)$ is $\big(q_0, b(\verA_n), b'(\verA_n)\big)$ for $\verA_n=\infB\restr_n\in\{\dL,\dR\}^n$. $\Aa_2$ visits an accepting state in $\runA$ while reading the $n$\=/th phase of $\infA(X)$ if and only if $\verA_n\in X$.
\end{fact}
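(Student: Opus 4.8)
The plan is to prove the above Fact by a case analysis of the possible runs of $\Aa_2$, organised phase by phase, followed by an induction on the phase index; the statement about accepting states and the bijection will then be read off from the same analysis. First I would record how $\Aa_2$ behaves on a single block. Suppose $\Aa_2$ is in a configuration $(q_0,c_1,c_2)$ exactly when a block $B^s(-n_1,-n_2,+m_1,+m_2)$ begins. Since the letter ${<}$ appears only at block boundaries, inside the block the automaton has only two options. Either it takes the self\=/loop of $q_0$ over the initial ${<}$; then, since the remaining letters of the block all lie in $A_0$, it stays in $q_0$ for the rest of the block and ends again in $(q_0,c_1,c_2)$. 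Or it takes the transition from $q_0$ to $q_1$ over ${<}$, which we call \emph{choosing} the block. In that case the run segment is forced, because on the letters that actually occur the states $q_1,q_2,q_a,q_r,q_3$ exhibit no non\=/determinism and the block has the fixed shape ${<}\,d_1^{n_1}d_2^{n_2}\,{|}\,i_1^{m_1}i_2^{m_2}\,s\,{>}$: the segment exists (as a finite run) if and only if $c_1\ge n_1$ and $c_2\ge n_2$ --- otherwise one of the loops $d_1:(-1,0)$, $d_2:(0,-1)$ would drive a counter below zero --- it ends in $(q_3,\,c_1-n_1+m_1,\,c_2-n_2+m_2)$, and it visits the accepting state $q_a$ exactly when $s={+}$.

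Next I would move from blocks to whole phases. A phase is a concatenation of blocks followed by the separator $\sharp$, over which $q_0$ has no transition; so in an infinite run $\Aa_2$ cannot stay in $q_0$ for an entire phase, hence it must choose some block. Once it does, it is in $q_3$, and as $q_3$ only loops over $A_0$ and moves to $q_0$ over $\sharp$, it stays in $q_3$ until the end of the phase and then returns to $q_0$. Thus along any run exactly one block is chosen per phase: the configuration at the start of phase $n{+}1$ is the one at the start of phase $n$ modified by the effect of the chosen block computed above, and $q_a$ is visited during phase $n$ if and only if the chosen block is accepting. Now I run the induction on $n$. Before phase $0$ the configuration is $(q_0,0,0)=\big(q_0,b(\epsilon),b'(\epsilon)\big)$, so $\verA_0=\epsilon$. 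Assume the configuration at the start of phase $n$ is $\big(q_0,b(\verA_n),b'(\verA_n)\big)$ with $\verA_n\in\{\dL,\dR\}^n$. A block of phase $n$, indexed by some $\verA\in\{\dL,\dR\}^n$ and $d\in\{\dL,\dR\}$, has counter decrements $b(\verA)$ and $b'(\verA)$, so by the previous step it can be chosen only if $b(\verA_n)\ge b(\verA)$ and $b'(\verA_n)\ge b'(\verA)$. But for $\verA\neq\verA_n$ the stated anti\=/chain property of $b,b'$ on $\{\dL,\dR\}^n$ gives $b(\verA)>b(\verA_n)$ or $b'(\verA)>b'(\verA_n)$, so one prescribed decrement exceeds the available counter value and the block cannot be chosen; hence $\verA=\verA_n$. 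So the only available choices are the two blocks for $(\verA_n,\dL)$ and $(\verA_n,\dR)$, and choosing the one with direction $d$ carries the automaton to $\big(q_0,b(\verA_n d),b'(\verA_n d)\big)$ at the start of phase $n{+}1$; putting $\verA_{n+1}\eqdef\verA_n d\in\{\dL,\dR\}^{n+1}$ re\=/establishes the invariant.

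Finally, the bijection. By the phase analysis a run of $\Aa_2$ over $\infA(X)$ is completely determined by the sequence $d_0d_1\cdots\in\{\dL,\dR\}^\omega$ of directions it chooses in successive phases; conversely, for any such sequence the run that in phase $n$ chooses the block for $(\verA_n,d_n)$ --- with $\verA_0=\epsilon$ and $\verA_{n+1}=\verA_n d_n$ --- is well defined and infinite, since at the start of phase $n$ the counters hold exactly $b(\verA_n)$ and $b'(\verA_n)$, the precise decrements prescribed by that block, so the choice is always feasible. This is the asserted bijection $\infB\mapsto\runA$, and the induction gives $\verA_n=\infB\restr_n$. During phase $n$ the automaton visits $q_a$ if and only if the chosen block $B^s(\ldots)$ has $s={+}$, which by the definition of the $n$\=/th phase means exactly $\verA_n\in X$; this is the last assertion of the Fact. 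The one step that I expect to carry real weight, rather than being bookkeeping, is the claim that choosing a block for some $\verA\neq\verA_n$ is not merely wasteful but outright impossible in an infinite run: this is precisely where the coordinate\=/wise anti\=/chain structure of the pairs $\big(b(\verA),b'(\verA)\big)_{\verA\in\{\dL,\dR\}^n}$ is indispensable, since it guarantees that for every competitor $\verA\neq\verA_n$ one of the two counters is strictly too small to absorb the prescribed decrement, which kills the run.
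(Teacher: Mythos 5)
Your proof is correct and follows exactly the route the paper intends: the paper's own proof of this Fact is just the phrase ``Easy induction,'' and your phase-by-phase induction --- with the block-level determinism and the anti-chain property of $\big(b(\verA),b'(\verA)\big)$ forcing the unique choosable pair of blocks --- is precisely that induction spelled out. No gaps; the one step you flag as load-bearing (impossibility of choosing a block for $\verA\neq\verA_n$) is indeed the crux and you justify it correctly.
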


\begin{proof}
Easy induction.
\end{proof}

This concludes the proof of Lemma~\ref{lem:hardness-two}.

\section{Representation of \texorpdfstring{$\IFpre$}{IFpre}}

To construct our continuous reduction in the one\=/counter case, we need the following simple lemma that provides an alternative characterisation of the set $\IFinx$. Let us introduce the following definition.

\begin{definition}
A sequence $\verA_0,\verA_1\ldots\in \Tt$ is called a \emph{correct chain} if $\verA_0=\epsilon$ and for every $n=0,1,\ldots$:
\begin{enumerate}
\item $|\verA_{n+1}|=|\verA_n|+1$,
\label{it:cor-ch-len}
\item $\verA_{n+1}\inxeq \verA_n \dR$ (or equivalently $\verA_{n+1}\lexeq \verA_n \dR$).
\label{it:cor-ch-ord}
\end{enumerate}

A correct chain is \emph{witnessing} for a set $X\subseteq\Tt$ if for infinitely many $n$ we have $\verA_n \in X$ and $\verA_{n+1}\inxeq \verA_n \dL$.
\end{definition}

\begin{lemma}
A set $X\subseteq \Tt$ belongs to $\IFinx$ if and only if there exists a correct chain witnessing for $X$.
\end{lemma}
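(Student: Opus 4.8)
The plan is to linearise everything by embedding $\Tt$ into the dyadic rationals compatibly with the infix order. For a node $\verA$ I would put $r(\verA)\eqdef\frac{2b(\verA)+1}{2^{|\verA|+1}}$ (the midpoint of the dyadic interval $\big[b(\verA)2^{-|\verA|},(b(\verA)+1)2^{-|\verA|}\big]$) and $h(\verA)\eqdef\frac{b(\verA)+1}{2^{|\verA|}}=r(\verA)+2^{-(|\verA|+1)}$ (its right endpoint). A short case analysis on the $\preceq$-relationship between two nodes yields $\verA\inxeq\verB\iff r(\verA)\le r(\verB)$; in particular $r(\verA)<h(\verA)$, $r(\verA\dL)=r(\verA)-2^{-(|\verA|+2)}$ and $r(\verA\dR)=r(\verA)+2^{-(|\verA|+2)}$, while $\verA\mapsto h(\verA)$ is a bijection of $\{\dL,\dR\}^n$ with $\{k2^{-n}\mid 1\le k\le 2^n\}$. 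Since in clauses~\ref{it:cor-ch-len}--\ref{it:cor-ch-ord} the two compared nodes always have equal length, I can restate things numerically: a sequence $(\verA_n)_n$ with $\verA_0=\epsilon$ and $|\verA_n|=n$ is a correct chain iff $\big(h(\verA_n)\big)_n$ is non-increasing, and (whenever $|\verA_{n+1}|=|\verA_n|+1$) the condition $\verA_{n+1}\inxeq\verA_n\dL$ is equivalent to $h(\verA_{n+1})\le r(\verA_n)$.

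For the implication ``$\Leftarrow$'' I would take a correct chain $(\verA_n)_n$ witnessing for $X$ and enumerate its witnessing indices as $n_0<n_1<\cdots$. If $n$ is a witnessing index then, using that $h$ is non-increasing, for every $m>n$ we have $r(\verA_m)<h(\verA_m)\le h(\verA_{n+1})\le r(\verA_n)$, i.e.~$\verA_m\inx\verA_n$. Applying this with $n=n_k$ and $m=n_{k+1}$ shows that $\big(\verA_{n_k}\big)_k$ is a strictly $\inx$-descending sequence, and all its members lie in $X$ since each $n_k$ is a witnessing index; hence $X\in\IFinx$. This direction is essentially immediate once the numerical picture is set up.

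The reverse implication is where the work is. From an infinite strictly $\inx$-descending sequence in $X$ I would first thin to a subsequence $\verB_0\inx\verB_1\inx\cdots$ in $X$ whose lengths $\ell_k\eqdef|\verB_k|$ are strictly increasing --- possible because a strictly $\inx$-descending sequence cannot stay inside a finite set, so its lengths are unbounded along every tail. Then $r(\verB_{k+1})<r(\verB_k)$ with both numbers multiples of $2^{-(\ell_{k+1}+1)}$, hence $h(\verB_{k+1})=r(\verB_{k+1})+2^{-(\ell_{k+1}+1)}\le r(\verB_k)$. I would then build a non-increasing sequence $(v_n)_n$ with $v_0=1$, $v_{\ell_k}=h(\verB_k)$ and $v_{\ell_k+1}\le r(\verB_k)$, by interpolating block by block: set $v_n=\lceil h(\verB_0)2^n\rceil/2^n$ for $1\le n\le\ell_0$, and $v_n=\lceil h(\verB_{k+1})2^n\rceil/2^n$ for $\ell_k<n\le\ell_{k+1}$. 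Each such formula is non-increasing in $n$ (because $\lceil x2^{n+1}\rceil\le 2\lceil x2^n\rceil$), it hits $h(\verB_k)$ exactly at $n=\ell_k$ (since $h(\verB_k)2^{\ell_k}\in\Z$), and at $n=\ell_k+1$ it already drops to some value $\le r(\verB_k)<h(\verB_k)$ (since $r(\verB_k)2^{\ell_k+1}\in\Z$ and $h(\verB_{k+1})\le r(\verB_k)$); so $(v_n)_n$ is non-increasing with $v_0=1$. Letting $\verA_n$ be the unique node of length $n$ with $h(\verA_n)=v_n$ (in particular $\verA_0=\epsilon$), the numerical restatement makes $(\verA_n)_n$ a correct chain, while $\verA_{\ell_k}=\verB_k\in X$ (same length and same $h$-value) and $h(\verA_{\ell_k+1})=v_{\ell_k+1}\le r(\verB_k)=r(\verA_{\ell_k})$, i.e.~$\verA_{\ell_k+1}\inxeq\verA_{\ell_k}\dL$, so $(\verA_n)_n$ witnesses for $X$.

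The conceptual point that makes it all go through is the first step: once correct chains are identified with arbitrary non-increasing sequences of the right-endpoint values $h(\cdot)$, direction ``$\Leftarrow$'' is trivial. The main obstacle is the construction in ``$\Rightarrow$'': one must thin first so that the lengths grow, and then interpolate the $h$-values monotonically while forcing a strict drop to occur exactly one step after each $\verB_k$ --- the per-block ceiling formula is what reconciles these constraints.
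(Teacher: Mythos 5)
Your proof is correct and is essentially the paper's argument rewritten in dyadic\-/rational coordinates: since $h(\verB\restr_n)=\big\lceil h(\verB)2^n\big\rceil/2^n$ for $n\leq|\verB|$, your ceiling interpolation produces exactly the same sequence of nodes as the paper's construction, which (after the same thinning so that the lengths strictly increase) simply sets $\verA_n\eqdef\verB_i\restr_n$ for the appropriate $i$. The only blemish is notational: the thinned subsequence should read $\verB_0\inxg\verB_1\inxg\cdots$, as your subsequent inequalities $r(\verB_{k+1})<r(\verB_k)$ confirm.
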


\begin{proof}
First take a correct chain witnessing for $X$. Let $\verB_0,\verB_1,\ldots$ be the subsequence that shows that $(\verA_n)_{n\in\N}$ is witnessing for $X$. In that case, by the definition, for all $n$ we have $\verB_n\in X$ and $\verB_{n+1}\inx \verB_n$ (because $\verB_{n+1}\dM^\omega\lexeq \verB_n\dL\dR^\omega\lex \verB_n\dM^\omega$). Thus, $X$ has an infinite ${\inxeq}$\=/descending chain and belongs to $\IFinx$.

Now assume that $X\in\IFinx$ and $\verB_0\inxg\verB_1\inxg\verB_2\inxg\ldots$ is a sequence witnessing that. Without loss of generality we can assume that $|\verB_{n+1}|>|\verB_n|$ because for each fixed depth $k$ there are only finitely many nodes of $\Tt$ in $\{\dL,\dR\}^{\leq k}$. We can now add intermediate nodes in\=/between the sequence $(\verB_n)_{n\in\N}$ to construct a correct chain witnessing for $X$; the following pseudo\=/code realises this goal:

\begin{minipage}{0.8\linewidth}
\begin{lstlisting}
$n$ := $0$;
$i$ := $0$;
while (true) {
  if ($n$ > $|\verB_i|$) {
    $i$ := $i+1$;
  }
  
  $\verA_n$ := $\verB_i\restr_{n}$; 
  $n$ := $n+1$;
}
\end{lstlisting}
\end{minipage}

Clearly, Property~\ref{it:cor-ch-len} in the definition of a correct chain is guaranteed. Let $i\in\N$ and $n=|\verB_i|$. By the fact that $\verB_{i+1}\inx \verB_i$ we know that $\verB_{i+1}\restr_{n+1}\inxeq \verB_i \dL$. Therefore, for every $n\in\N$ we have $\verA_{n+1}\inxeq \verA_{n} \dR$ and if $n=|\verB_i|$ for some $i$ then $\verA_{n+1}\inxeq \verA_{n} \dL$. It implies that the sequence $(\verA_n)_{n\in\N}$ satisfies Property~\ref{it:cor-ch-ord} in the definition of a correct chain and is witnessing for $X$.
\end{proof}

\section{Hardness for \texorpdfstring{$1$\=/}{1-}counter}

In this section we provide an example of an $\w$\=/language that is $\asigma{1}$\=/complete and can be recognised by a VASS $\Aa_1$ with one counter. $\Aa_1$ is depicted on Figure~\ref{fig:one-vass}, it is very similar to $\Aa_2$. Let $A_0\eqdef\{{<}, d, {|}, i, {+}, {-}, {>}\}$ and let the alphabet $A\eqdef A_0\cup\{\sharp\}$.

\begin{figure}
\centering
\begin{tikzpicture}[->,>=latex,auto,initial text={},scale=0.9]
\tikzstyle{trans}=[scale=0.8]
\node[state,initial  ] (q0) at (0,0) {$q_0$};
\node[state          ] (q1) at (3,0) {$q_1$};
\node[state          ] (q2) at (6,0) {$q_2$};
\node[state,accepting] (qf) at (9,+1) {$q_a$};
\node[state          ] (qr) at (9,-1) {$q_r$};
\node[state          ] (q3) at (12,0) {$q_3$};

\path (q0) edge [loop, in=60 , out=120, looseness=8] node[trans] {$A_0$} (q0)
           edge node[trans] {${<}$} (q1)
      (q1) edge [loop, in=60 , out=120, looseness=8] node[trans] {$d:-1$} (q1)
           edge node[trans] {${|}$} (q2)
      (q2) edge [loop, in=60 , out=120, looseness=8] node[trans] {$i:+1$} (q2)
           edge node[above, trans] {${+}$} (qf)
           edge node[below, trans] {${-}$} (qr)
      (qf) edge node[above, trans] {${>}$} (q3)
      (qr) edge node[below, trans] {${>}$} (q3)
      (q3) edge [loop, in=60 , out=120, looseness=8] node[trans] {$A_0$} (q3)
           edge [in=-90, out=270, looseness=0.8] node[below, trans] {$\sharp$} (q0);

\end{tikzpicture}
\caption{The VASS $\Aa_1$ with one counter that recognises a $\asigma{1}$\=/complete $\w$\=/language.}
\label{fig:one-vass}
\end{figure}

\begin{proposition}
\label{pro:hardness-one}
There exists a continuous reduction from $\IFinx$ to the $\w$\=/language recognised by $\Aa_1$.
\end{proposition}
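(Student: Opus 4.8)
The plan is to build, continuously in $X\subseteq\Tt$, an $\w$-word $\infA(X)\in A^\omega$ and prove $\infA(X)\in\lang(\Aa_1)$ iff $X\in\IFinx$; combined with the previous lemma this yields the $\asigma{1}$-completeness. As in the two-counter case, $\infA(X)=\finA_0\sharp\finA_1\sharp\cdots$ with $\finA_n\in A_0^\ast$ a concatenation of blocks $B^s(-p,+q)\eqdef {<}\,d^{\,p}\,{|}\,i^{\,q}\,s\,{>}$ ($p,q\in\N$, $s\in\{{+},{-}\}$), so that along any run $\Aa_1$ chooses exactly one block per phase, a chosen block $B^s(-p,+q)$ transforms the counter by $c\mapsto c-p+q$ (legal only if $c\ge p$), and the run is accepting iff infinitely many chosen blocks carry $s={+}$. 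Each $\finA_n$ will depend only on $X\cap\{\dL,\dR\}^n$, which makes $X\mapsto\infA(X)$ continuous.

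The blocks of phase $n$ will encode the legal one-step extensions of a correct chain from depth $n$ to depth $n+1$, and the counter will track (the value $b(\cdot)$ of) the current node of the simulated correct chain. Precisely: for every $\verA\in\{\dL,\dR\}^n$ and $\verA'\in\{\dL,\dR\}^{n+1}$ with $\verA'\inxeq\verA\dR$ (equivalently $b(\verA')\le 2b(\verA)+1$) put into $\finA_n$ the block $B^s\big({-}b(\verA),{+}b(\verA')\big)$, with $s={+}$ exactly when $\verA\in X$ and the step is left-moving, i.e.\ $\verA'\inxeq\verA\dL$ (equivalently $b(\verA')\le 2b(\verA)$). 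Completeness is then routine: given a correct chain $\verA_0=\epsilon,\verA_1,\ldots$ witnessing for $X$, let $\Aa_1$ follow it; by induction the counter equals $b(\verA_n)$ at the start of phase $n$ (the relevant block exists because $b(\verA_{n+1})\le 2b(\verA_n)+1$, and the decrement does not underflow because $b(\verA_n)\ge b(\verA_n)$), and $\Aa_1$ visits $q_a$ in phase $n$ precisely at the witnessing positions, of which there are infinitely many; hence $\infA(X)$ is accepted.

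For soundness, suppose $\infA(X)$ is accepted, and let $B^{s_n}(-p_n,+q_n)$ be the blocks chosen, with counter values $c_0=0,c_1,\dots$ So $p_n\le c_n$ (no underflow) and $q_n\le 2p_n+1$ (only such blocks were provided), whence
\[ c_{n+1}=c_n-p_n+q_n\le c_n+p_n+1\le 2c_n+1 ,\]
and inductively $c_n\in\{0,\dots,2^n-1\}$. Thus there are unique nodes $\hat\verA_n\in\{\dL,\dR\}^n$ with $b(\hat\verA_n)=c_n$, $\hat\verA_0=\epsilon$, and $b(\hat\verA_{n+1})=c_{n+1}\le 2c_n+1=b(\hat\verA_n\dR)$, so $(\hat\verA_n)_n$ is a genuine correct chain. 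Moreover $s_n={+}$ forces $q_n\le 2p_n$, hence $c_{n+1}\le c_n+p_n\le 2c_n$, so step $n$ of $(\hat\verA_n)_n$ is left-moving. What remains is to match membership in $X$: one must see that among the infinitely many accepting phases there are infinitely many at which the chosen block is the honest one for the node $\hat\verA_n$ the counter encodes (so that $\hat\verA_n\in X$ there), which makes $(\hat\verA_n)_n$ witnessing for $X$ and concludes that $X\in\IFinx$.

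The main obstacle is exactly this last step, and it is where the one-counter construction is genuinely harder than the two-counter one: with two counters the anti-chain of reachable configurations pins down the current node, so a dishonest choice of block is impossible, whereas with one counter a configuration $(q_0,c)$ may exceed the intended one and $\Aa_1$ can then choose a block attached to a node strictly to the left of $\hat\verA_n$. The plan is to neutralise this by a monotonicity/simulation argument: arrange the decrements, increments, and the placement of the $s$-marker so that any deviation from the honest simulation produces a configuration dominated (in the natural simulation order of $\Aa_1$) by one reachable honestly, so that deviating never helps to visit $q_a$ infinitely often; one can then replace an arbitrary accepting run by an honest accepting run following a correct chain, and read off a witnessing chain for $X$. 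Proving this domination property --- and verifying that the honest run still hits $q_a$ infinitely often --- is the technical heart, and the place where the precise (possibly more refined) choice of the blocks in $\finA_n$ must be pinned down.
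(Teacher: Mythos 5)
Your overall architecture is the paper's: reduce from the correct-chain characterisation of $\IFinx$, encode depth-$n$ data in the $n$-th phase $\finA_n$ (giving continuity), let the counter carry the position of the current node, and read acceptance off the $+/-$ marks of the chosen blocks. The completeness direction is fine. But the gap you flag at the end is not a technical loose end --- it is the whole content of the proposition, and the construction as you have pinned it down (blocks carrying the raw values $b(\verA)$, $b(\verA')$) genuinely fails there. Concretely: reading the chain off the counter values gives a correct chain $(\hat\verA_n)$ but no control over membership in $X$, since an accepting block only certifies $\verA\in X$ for its \emph{source} $\verA$, which may lie strictly to the left of $\hat\verA_n$. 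Reading the chain off the chosen sources fails the other way: the slack $c_n-p_n$ can be as large as $2^n-1$, so after one step the run can select a source $\verA^{(n+1)}$ with $b(\verA^{(n+1)})$ up to $c_n+1$, far to the right of $\verA^{(n)}\dR$, so the chosen sources need not form a descending chain at all. Moreover the simulation/domination salvage you sketch points the wrong way: showing that every deviating configuration is dominated by an honestly reachable one does not make the honest run accepting, so it cannot convert an arbitrary accepting run into a witnessing chain.

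The paper's resolution is a rescaling, not a domination argument. It sets $m(-1)=1$, $m(n)=m(n-1)\cdot 2^n$, $e(\verA)=m(|\verA|-1)\cdot b(\verA)$, and offers in phase $n$ only the blocks $B^s\big({-}e(\verA),{+}e(\verA d)\big)$ for $\verA\in\{\dL,\dR\}^n$, $d\in\{\dL,\dR\}$, with $s={+}$ iff $\verA\in X$ and $d=\dL$. The invariant $e(\verA)+m(|\verA|-1)\leq m(|\verA|)$ gives inductively $c_n<m(n)$, so the accumulated slack is always strictly smaller than one grid unit $m(n)$ at depth $n+1$; the chain of inequalities $e(\verA_{n+1})\leq c_{n+1}=c_n-e(\verA_n)+e(\verA_nd_n)<m(n)+e(\verA_nd_n)$ then forces $b(\verA_{n+1})\leq b(\verA_nd_n)$. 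Hence the sequence of \emph{chosen source nodes} is itself a correct chain, and it is automatically witnessing because accepting blocks exist only for sources in $X$ with $d=\dL$. In short, cheating to the left is permitted but quantitatively bounded so that it never breaks the chain condition --- that is the monotone-encoding trick your proposal is missing, and without it (or some equivalent device) the soundness direction does not go through.
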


Similarly as before, we will use the notion of phases and blocks. Since there is only one counter now (and only two letters modifying its value $d$ and $i$) we exchange the definition~\eqref{eq:block-two} by the following one (for $n,m\in\N$ and $s\in\{{+},{-}\}$):
\begin{equation}
B^s(-n,+m)\eqdef {<}\ d^n\ {|}\ i^m\ s\ {>}\ \in A_0^\ast.
\label{eq:block-one}
\end{equation}

Similarly as before, we will take a set $X\subseteq \Tt$ and construct an $\w$\=/word $\infA(X)$. This $\w$\=/word will be a concatenation of infinitely many phases $\finA_0\sharp\finA_1\sharp\cdots$. The $n$\=/th phase $\finA_n$ will depend on $X\cap\{\dL,\dR\}^n$. The configurations $(q_0,c)$ reached at the beginning of an $n$\=/th phase will be in correspondence with nodes $\verA\in\{\dL,\dR\}^n$. The bigger the value $c$, the higher in the DFS order (or the lexicographic order, as they overlap here) the respective node $\verA$ is.

To precisely define our $\w$\=/word $\infA(X)$ we need to define a fast\=/growing functions: $\fun{m}{\N}{\N}$ and $\fun{e}{\Tt}{\N}$:
\begin{align*}
m(-1)&= 1,\\
m(n)&= m(n-1)\cdot 2^n,\\
e(\verA) &= m(|\verA|-1)\cdot b(\verA) \text{ for $\verA\in\Tt$}.
\end{align*}

Notice the following two invariants of this definition, for $n\in\N$ and $\verA,\verA'\in\{\dL,\dR\}^n$:
\begin{align}
\verA\inx \verA' &\Longleftrightarrow e(\verA)\leq e(\verB),\label{eq:order}\\
e(\verA) + m(|\verA|-1) &\leq m(|\verA|).\label{eq:upper}
\end{align}

We take any $n=0,1,\ldots$ and define the $n$\=/th phase $\finA_n$. Let $\finA_n$ be the concatenation of the following blocks, for all $\verA\in\{\dL,\dR\}^n$ and $d\in\{\dL,\dR\}$:
\[B^s\big({-}e(\verA), {+}e(\verA d)\big),\]
where $s={+}$ if $\verA\in X$ and $d=\dL$; otherwise $s={-}$. Thus, the $n$\=/th phase is a concatenation of $2^{n+1}$ blocks, one for each node $\verA d$ in $\{\dL,\dR\}^{n+1}$.

To conclude the proof of Proposition~\ref{pro:hardness-one} it is enough to prove the following two lemmas.

\begin{lemma}
\label{lem:chain-to-run}
If there exists a correct chain witnessing for $X$ then $\Aa_1$ accepts $\infA(X)$.
\end{lemma}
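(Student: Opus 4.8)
The plan is to take a correct chain $\verA_0,\verA_1,\ldots$ witnessing for $X$ and use the bijection-like correspondence between branches and runs (as in the two-counter case) to manufacture an accepting run of $\Aa_1$ on $\infA(X)$. The guiding intuition is that after reading the first $n$ phases, the run should sit in the configuration $(q_0, e(\verA_n))$; the correct chain tells us which block to choose in phase $n$ (namely the block associated to the pair $\verA_n, d$ where $\verA_{n+1} = \verA_n d$), and the counter bookkeeping will work out exactly because of invariants~\eqref{eq:order} and~\eqref{eq:upper}.

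First I would verify by induction on $n$ that there is a partial run reaching $(q_0, e(\verA_n))$ just before phase $\finA_n$. The base case is $n=0$: $\verA_0 = \epsilon$, $b(\epsilon)=0$, so $e(\epsilon)=0$, matching the initial configuration $(q_0,0)$. For the inductive step, phase $\finA_n$ is a concatenation of $2^{n+1}$ blocks $B^s(-e(\verB),+e(\verB d'))$ over all $\verB\in\{\dL,\dR\}^n$, $d'\in\{\dL,\dR\}$. Starting in $(q_0,e(\verA_n))$, the automaton loops in $q_0$ (never choosing a block) until it reaches the block corresponding to $\verB = \verA_n$ and $d' = d$ where $\verA_{n+1}=\verA_n d$. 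Here I need two things: (i) the counter never needs to go negative while skipping the earlier blocks — this is automatic since skipped blocks leave the counter untouched in $q_0$; and (ii) on the chosen block, the $d^{e(\verA_n)}$ segment can be executed from value $e(\verA_n)$, bringing the counter to $0$, then $i^{e(\verA_n d)}$ brings it to $e(\verA_{n+1})$, and we exit in $q_3$ with value $e(\verA_{n+1})$; finally looping in $q_3$ over the remaining blocks of the phase and taking $\sharp$ returns us to $q_0$ with value $e(\verA_{n+1})$. Condition~\eqref{eq:upper} is not even needed for feasibility here (it will matter for the converse lemma), so this step is essentially routine.

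Next I would check the Büchi condition. By the definition of the phases, the block associated to $(\verB, d')$ is accepting (has $s={+}$) precisely when $\verB\in X$ and $d'=\dL$. The chain is witnessing, so for infinitely many $n$ we have $\verA_n\in X$ and $\verA_{n+1}\inxeq \verA_n\dL$; combined with property~\ref{it:cor-ch-len} of a correct chain ($|\verA_{n+1}|=|\verA_n|+1$) and the fact that on $\{\dL,\dR\}^{n+1}$ the infix and lexicographic orders coincide, $\verA_{n+1}\inxeq\verA_n\dL$ forces $\verA_{n+1}=\verA_n\dL$, i.e. $d=\dL$. Hence for infinitely many $n$ the block chosen in phase $n$ is accepting, so the run visits $q_a$ infinitely often and is accepting. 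I would phrase this carefully so that the extraction of ``$d=\dL$'' from ``$\verA_{n+1}\inxeq\verA_n\dL$'' is explicit.

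The main obstacle, such as it is, is bookkeeping rather than mathematical depth: one has to be careful that while the run loops in $q_0$ through the blocks of phase $\finA_n$ preceding the chosen one, it really can ignore them (each block starts with ${<}$, but taking the $q_0$-selfloop over ${<}$ and then the $q_0$-selfloops over $d, |, i, s, {>}$ is legal and does not touch the counter), and that on the chosen block the subtraction $d^{e(\verA_n)}$ is exactly exhausting — neither too few nor too many $d$'s — which is guaranteed because the block is literally $B^s(-e(\verA_n),+e(\verA_n d))$ with the matching value $e(\verA_n)$ by the induction hypothesis. I expect the whole argument to be a short induction plus the order-theoretic observation above; I would present it as such.
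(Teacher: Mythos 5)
Your construction contains a genuine gap: you assume throughout that $\verA_{n+1}=\verA_n d$ for some $d\in\{\dL,\dR\}$, i.e.\ that consecutive elements of the correct chain are parent and child. The definition of a correct chain does not guarantee this: it only requires $|\verA_{n+1}|=|\verA_n|+1$ and $\verA_{n+1}\inxeq\verA_n\dR$ (resp.\ $\verA_{n+1}\inxeq\verA_n\dL$ at witnessing indices). For instance $\verA_n=\dR$, $\verA_{n+1}=\dL\dL$ is perfectly legal, and indeed the chains produced in the paper's characterisation of $\IFinx$ jump between branches in exactly this way --- that is the whole point of the intermediate lemma. Consequently your prescription ``choose the block associated to $(\verA_n,d)$ where $\verA_{n+1}=\verA_n d$'' is undefined in general, your exact-value invariant $c_n=e(\verA_n)$ cannot be maintained, and the step ``$\verA_{n+1}\inxeq\verA_n\dL$ forces $\verA_{n+1}=\verA_n\dL$'' is false: on $\{\dL,\dR\}^{n+1}$ the relation $\inxeq$ is the lexicographic order, and being $\lexeq\verA_n\dL$ does not force equality.

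The repair is to decouple the chosen block from $\verA_{n+1}$: in phase $n$ choose the block $B^{+}\big({-}e(\verA_n),{+}e(\verA_n\dL)\big)$ if $n$ is a witnessing index and $B^{-}\big({-}e(\verA_n),{+}e(\verA_n\dR)\big)$ otherwise, and weaken the invariant to the inequality $c_n\geq e(\verA_n)$. The inequality suffices to pay the $d^{e(\verA_n)}$ segment of the chosen block, and it is preserved because $c_{n+1}=c_n-e(\verA_n)+e(\verA_n d)\geq e(\verA_n d)\geq e(\verA_{n+1})$, the last step by the monotonicity invariant~\eqref{eq:order} applied to $\verA_{n+1}\inxeq\verA_n d$. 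The B\"uchi part then goes through as you describe, since at witnessing indices $\verA_n\in X$ and the chosen direction is $\dL$, so the chosen block is accepting. Your remaining observations (that skipped blocks can be read by looping in $q_0$ and $q_3$ without touching the counter, that the base case $e(\epsilon)=0$ matches the initial configuration, and that~\eqref{eq:upper} is only needed for the converse lemma) are all correct.
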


\begin{lemma}
\label{lem:run-to-chain}
If $\Aa_1$ accepts $\infA(X)$ then there exists a correct chain witnessing for $X$.
\end{lemma}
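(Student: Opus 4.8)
}
The plan is to read a witnessing correct chain directly off an accepting run $\runA$ of $\Aa_1$ on $\infA(X)$. First I would pin down the shape of runs. Since $q_0$ has no outgoing $\sharp$-transition and the only way to leave $q_3$ is its $\sharp$-transition, in each phase $\finA_n$ the run must enter $q_1$ on exactly one block and read every other block on the $A_0$-loops of $q_0$ (before) or $q_3$ (after), without touching the counter; the traversal of the chosen block is then forced. A block $B^s(-n,+m)$ can be chosen from a configuration $(q_0,c)$ only if $c\geq n$, and afterwards, at the start of the next phase, the run is in configuration $(q_0,\,c-n+m)$. Hence $\runA$ is completely described by the sequence $z_1,z_2,\ldots$, where $z_{n+1}\in\{\dL,\dR\}^{n+1}$ indexes the block chosen in phase $\finA_n$: writing $c_n$ for the counter value entering phase $n$, we have $c_0=0$, the feasibility constraint $c_n\geq e(z_{n+1}\restr_n)$, and $c_{n+1}=c_n-e(z_{n+1}\restr_n)+e(z_{n+1})$. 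Moreover $\runA$ visits $q_a$ in phase $\finA_n$ exactly when the chosen block is accepting, i.e.~when $z_{n+1}$ ends with $\dL$ and $z_{n+1}\restr_n\in X$; so $\runA$ is accepting iff this happens for infinitely many $n$.

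The main obstacle is controlling the counter, since one counter cannot record which length-$n$ node the run is ``at'': $\runA$ may be \emph{dishonest}, choosing $z_{n+1}$ with $z_{n+1}\restr_n\neq z_n$, and the fast-growing function $m$ is designed precisely to keep the resulting surplus small. I would set $\delta_n\eqdef c_n-e(z_n)$ for $n\geq 1$; using $e(\epsilon)=0$ one gets $\delta_1=0$ and $\delta_{n+1}=\delta_n+e(z_n)-e(z_{n+1}\restr_n)$, and the feasibility constraint becomes precisely $\delta_{n+1}\geq 0$. Telescoping together with the slack inequality~\eqref{eq:upper} yields $0\leq\delta_n\leq m(n-1)-1$ for all $n\geq 1$. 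Since $e(z_n)$ and $e(z_{n+1}\restr_n)$ are both multiples of $m(n-1)$ while $\delta_n<m(n-1)$, the inequality $\delta_{n+1}\geq 0$ forces $b(z_{n+1}\restr_n)\leq b(z_n)$, that is $b(z_{n+1})\leq 2b(z_n)+1$, using $b(z_{n+1}\restr_n)=\lfloor b(z_{n+1})/2\rfloor$. Informally: the node chosen in phase $n{+}1$ never lies strictly to the right of the $\dR$-child of the node chosen in phase $n$. Getting this bookkeeping exactly right --- the off-by-one in~\eqref{eq:upper} and the floor identities --- is the only delicate part; the rest parallels the two-counter analysis.

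It then remains to assemble the chain, and the correct choice is the shift $\verA_n\eqdef z_{n+1}\restr_n$ (so $\verA_0=z_1\restr_0=\epsilon$, as required). Property~\ref{it:cor-ch-len} is immediate since $|\verA_n|=n$. For Property~\ref{it:cor-ch-ord}, recall that on $\{\dL,\dR\}^{n+1}$ the infix order is comparison of binary values, and $b(\verA_{n+1})=\lfloor b(z_{n+2})/2\rfloor\leq b(z_{n+1})\leq 2\lfloor b(z_{n+1})/2\rfloor+1=b\big((z_{n+1}\restr_n)\dR\big)=b(\verA_n\dR)$ by the bound of the previous step, so $\verA_{n+1}\inxeq\verA_n\dR$. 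For the witnessing property, along the infinitely many $n$ at which the chosen block of phase $\finA_n$ is accepting we have $\verA_n=z_{n+1}\restr_n\in X$; and since $z_{n+1}$ then ends with $\dL$, the value $b(z_{n+1})$ is even, whence $b(\verA_n\dL)=2\lfloor b(z_{n+1})/2\rfloor=b(z_{n+1})\geq b(\verA_{n+1})$, giving $\verA_{n+1}\inxeq\verA_n\dL$. Thus $(\verA_n)_{n\in\N}$ is a correct chain witnessing for $X$, which completes the proof.
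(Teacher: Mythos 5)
Your proof is correct and follows essentially the same route as the paper: you extract the chain $(\verA_n)$ from the prefixes of the chosen blocks, bound the counter via~\eqref{eq:upper}, and use divisibility by the fast-growing scale $m$ to turn the non-negativity of the counter into the order constraint $b(\verA_{n+1})\leq b(\verA_n d)$. The only cosmetic difference is that you package the argument through the slack variable $\delta_n=c_n-e(z_n)$ with the bound $0\leq\delta_n<m(n-1)$, whereas the paper works directly with $c_n<m(n)$ and divides the resulting inequality by $m(n)$; the two computations are interchangeable.
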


\begin{proofof}{Lemma~\ref{lem:chain-to-run}}
Consider a correct chain $(\verA_n)_{n\in\N}$ witnessing for $X$. Assume that $I\subseteq \N$ is an infinite set such that for $n\in I$ we have $\verA_n\in X$ and $\verA_{n+1}\inxeq \verA_n \dL$. Let us construct inductively a run $\runA$ of $\Aa_1$ on $\infA(X)$. The invariant is that for each $n\in\N$ the configuration of $\runA$ before reading the $n$\=/th phase of $\infA(X)$ is of the form $(q_0,c_n)$ with $c_n\geq e(\verA_n)$. To define $\runA$ it is enough to decide which block to choose from an $n$\=/th phase of $\infA(X)$:
\begin{itemize}
\item if $n\in I$ then choose the block $B^{+}\big({-}e(\verA_n), {+}e(\verA_n \dL)\big)$,
\item otherwise choose the block $B^{-}\big({-}e(\verA_n), {+}e(\verA_n \dR)\big)$.
\end{itemize}
Notice that by the invariant, it is allowed to choose the respective blocks as $c_n\geq e(\verA_n)$. Because of~\eqref{eq:order} and the fact that $(\verA_n)_{n\in\N}$ is a correct chain, the invariant is preserved. As the set $I$ is infinite, the constructed run chooses an accepting block infinitely many times and thus is accepting.
\end{proofof}

\begin{proofof}{Lemma~\ref{lem:run-to-chain}}
Assume that $\runA$ is an accepting run of $\Aa_1$ over $\infA(X)$. For $n=0,1,\ldots$ let $(q_0,c_n)$ be the configuration in $\runA$ before reading the $n$\=/th phase of $\infA(X)$ and assume that $\runA$ chooses a block of the form $B^{s_n}\big({-}e(\verA_n), {+}e(\verA_n d_n)\big)$ in the $n$\=/th phase of $\infA(X)$. Our aim is to show that $(\verA_n)_{n\in\N}$ is a correct chain witnessing for $X$. First notice that by the construction of $\infA(X)$ we have $|\verA_n|=n$.

Clearly, as the counter needs to be non\=/negative, we have $e(\verA_n)\leq c_n$. Notice that by~\eqref{eq:upper} we obtain inductively for $n=0,1,\ldots$ that $c_n< m(n)$. Therefore, we have
\begin{align*}
m(n)\cdot b(\verA_{n+1}) = e(\verA_{n+1}) &\leq c_{n+1} =\\
= c_n - e(\verA_n) + e(\verA_n d_n) &< m(n)+e(\verA_nd_n) =\\
&=m(n)+m(n)\cdot b(\verA_nd_n).
\end{align*}
By dividing by $m(n)$ we obtain $b(\verA_{n+1})< 1+b(\verA_nd_n)$, thus $b(\verA_{n+1})\leq b(\verA_nd_n)$ and $\verA_{n+1}\inxeq \verA_nd_n\inxeq \verA_n\dR$. Moreover, whenever $s_n={+}$ (i.e.~the $n$\=/th chosen block is accepting) then $\verA_n\in X$ and $d_n=\dL$. Therefore, as $\runA$ chooses infinitely many accepting blocks, $(\verA_n)_{n\in\N}$ is witnessing for $X$.
\end{proofof}

This concludes the proof of Proposition~\ref{pro:hardness-one}.

\section{Concluding remarks and related work}

The core result of this paper is a technique of encoding a $\asigma{1}$\=/complete language in a \emph{monotone} way --- Proposition~\ref{pro:hardness-one}.

The results of this paper were obtained independently from the recent results of Finkel in which a family of $4$-counter blind B\"uchi automata is exhibited. The languages recognised by the automata from that family occupy exactly the same levels of the Wadge hierarchy as non-deterministic B\"uchi Turing machines (in particular there are such languages that are $\asigma{1}$\=/complete). The two results are incomparable, as the examples of Finkel involve $4$ counters but instead inhabit much more Wadge levels than only~$\asigma{1}$.

\bibliographystyle{alpha}
\bibliography{mskrzypczak}

\begin{thebibliography}{Kec95}

\bibitem[FS14]{finkel_top_det_petri}
Olivier Finkel and Micha{\l} Skrzypczak.
\newblock On the topological complexity of w-languages of non-deterministic
  {Petri} nets.
\newblock {\em Inf. Process. Lett.}, 114(5):229--233, 2014.

\bibitem[Kec95]{kechris_descriptive}
Alexander Kechris.
\newblock {\em Classical descriptive set theory}.
\newblock Springer-Verlag, New York, 1995.

\end{thebibliography}

\end{document}